\title{Normal forms for linear displacement context-free grammars}
\author{Alexey Sorokin}
\newcommand{\cort}[2]{(#1_{0}, \ldots, #1_{#2})}
\newcommand{\rk}[1]{\mathrm{rk}(#1)}
\newcommand{\Vvar}{\mathrm{Var}}
\newcommand{\Tm}{\mathrm{Tm}}
\newcommand{\GrTm}{\mathrm{GrTm}}
\theoremstyle{plain}
\newtheorem{theorem}{Theorem}
\newtheorem{lemma}{Lemma}
\newtheorem{statement}{Statement}
\theoremstyle{definition}
\newtheorem*{definition}{Definition}
\newtheorem*{example}{Example}
\begin{document}

\maketitle

\begin{abstract}
In this paper we prove several results on normal forms for linear displacement context-free grammars. The results themselves are rather simple and use well-known techniques, but they are extensively used in more complex constructions. Therefore this article mostly serves educational and referential purposes.
\end{abstract}

\section{Displacement context-free grammars}

Displacement context-free grammars (DCFGs) are a reformulation of well-nested multiple context-free grammars. In this draft we use tuple notation. Let $\Sigma$ be a finite alphabet, then $\Sigma^*$ denotes the set of all words with letters in $\Sigma$, $\epsilon$ being the empty string. When $\Sigma$ is fixed, $\Theta_k$ denotes the set of all tuples of the form $(u_0, \ldots, u_k), \: u_i \in \Sigma^*$ and $\Theta = \bigcup\limits_{k \in \Bbn} \Theta_k$. We call $k$ the rank of the tuple $u = (u_0, \ldots, u_k)$ and denote it by $\rk{u}$. The length $|u|$ of a tuple $|u|$ is the sum of lengths of all its components, we denote by $\Theta^{(l)}$ the set of all tuples of length $l$. The notation $\Theta^{(\leq l)}$ and $\Theta^{(\geq l)}$ are also understood in a natural way.

We use the displacement context-free languages notation for well-nested MCFLs. We consider tuples of strings instead of gapped strings. Let $\Sigma$ be a finite alphabet, then $\Sigma^*$ denotes the set of all words with letters in $\Sigma$, $\epsilon$ being the empty string. When $\Sigma$ is fixed, $\Theta_k$ denotes the set of all tuples of the form $(u_0, \ldots, u_k), \: u_i \in \Sigma^*$ and $\Theta = \bigcup\limits_{k \in \Bbn} \Theta_k$. We call $k$ the rank of the tuple $u = (u_0, \ldots, u_k)$ and denote it by $\rk{u}$. The length $|u|$ of a tuple $|u|$ is the sum of lengths of all its components, we denote by $\Theta^{(l)}$ the set of all tuples of length $l$ and also write $\Theta^{(\leq l)}$ for $\bigcup\limits_{j \leq l} \Theta^{(j)}$.

On the set of tuples we define the concatenation operation $\cdot \colon \Theta_i \times \Theta_j \to \Theta_{i+j}$ and the countable set of intercalation operations $\odot_l \colon \Theta_i \times \Theta_j \to \Theta_{i+j-1}$:

$$
\begin{array}{rcl}
\cort{x}{i} \cdot \cort{y}{j} & = & (x_0, \ldots, x_iy_0, \ldots, y_j) \\
\cort{x}{i} \odot_l \cort{y}{j} & = & (x_0, \ldots x_{l-1}y_0, y_1, \ldots, y_jx_l, \ldots, x_i)
\end{array}
$$

Let $N$ be a finite ranked set of nonterminals and $\mathrm{rk} \colon N \to \Bbn$ be the rank function. Let $Op_k = \{ \cdot, \odot_1, \ldots, \odot_k\}$, the set $Tm_k(N, \Sigma)$ of $k$-correct terms is defined as follows:
\begin{enumerate}
    \item $\forall j \leq k \: (\Theta_j \subset \Tm_k(N, \Sigma)$.
    \item If $\alpha, \beta \in \Tm_k$ and $\rk{\alpha} + \rk{\beta} \leq k$, then $(\alpha \cdot \beta) \in \Tm_k, \: \rk{\alpha \cdot \beta} = \rk{\alpha} + \rk{\beta}$.
    \item If $j \leq k, \: \alpha, \beta \in \Tm_k, \: \rk{\alpha} + \rk{\beta} \leq k + 1, \: \rk{\alpha} \geq j$, then \\ $(\alpha \odot_j \beta) \in \Tm_k, \: \rk{\alpha \cdot \beta} = \rk{\alpha} + \rk{\beta} - 1$.
\end{enumerate}

We assume that all the operation symbols are leftassociative and concatenation has greater priority then intercalation. We may also omit the $\cdot$ symbol, so the notation $A \odot_2 BC \odot_1 D$ means $(A \odot_2 ((B \cdot C)) \odot_1 D)$.

Let $\Vvar = \inbr{x_1, x_2, \ldots}$ be a countable ranked set of variables, such that for every $k$ there is an infinite number of variables having rank $k$. A context $C[x]$ is a term where a variable $x$ occurs in a leaf position, the rank of $x$ must respect the constraints of term construction. Provided $\beta \in Tm_k$ and $\rk{x} = \rk{\beta}$, $C[\beta]$ denotes the result of substituting $\beta$ for $x$ in $C$. A valuation function $\nu$ assigns words of rank $l$ to the variables of rank $l$ for any $l \leq k$ in an arbitrary way. It also maps all the elements of $\Theta$ to themselves. Interpreting the connectives from $Op_k$ as corresponding binary operations, we are able to calculate the valuation of every ground term (i.~e. containing no nonterminal occurrences). It is easy to prove that $\rk{\alpha} = \rk{\nu(\alpha)}$ holds for every $\alpha$. The set of $k$-correct ground terms is denoted by $\GrTm_k(\Sigma)$.

\begin{definition}
A $k$-displacement context-free grammar ($k$-DCFG) is a quadruple $G = \inang{ N, \Sigma, P, S }$, where $\Sigma$ is a finite alphabet, $N$ is a finite ranked set of nonterminals and $\Sigma \cap N = \varnothing, S \in N$ is a start symbol such that $rk(S) = 0$ and $P$ is a set of rules of the form $A \to \alpha$. Here $A$ is a nonterminal, $\alpha$ is a term from $Tm_k(N, \Sigma)$, such that $rk(A) = rk(\alpha)$.
\end{definition}

\begin{definition}
The derivability relation $\vdash_G \in N \times Tm_k$ associated with the grammar $G$ is the smallest reflexive transitive relation such that the facts $(B \to \beta) \in P$ and $A \vdash C[B]$ imply that $A \vdash C[\beta]$ for any context $C$. Let $L_G(A) = \{ \nu(\alpha) \mid A \vdash_G \alpha, \: \alpha \in GrTm_k\}$ denote the set of word, which are derivable from a nonterminal $A$, then $L(G) = L_G(S)$.
\end{definition}

\begin{example}
A $k$-DCFG $G_k = \inang{\inbr{S, T}, \inbr{a_i, b_i \mid i \in [0; k]}, P,S}$, where the set $P$ is defined below, derives the language $L_k = \{a_0^m b_0^m \ldots a_k^m b_k^m \}$.
\vspace*{-8pt}
$$
\begin{array}{rcl}
S & \to & \underbrace{( \ldots (}_{(k-1)\text{ times}} \!\!\! T \odot_1 \epsilon)\ldots) \odot_1 \epsilon\\
T & \to & a_0 (T \odot_1 (b_0, a_1) \ldots \odot_k (b_{k-1}, a_k)) b_k\\
T & \to & (\!\underbrace{\epsilon, \ldots, \epsilon}_{(k+1)\text{ times}}\!)
\end{array}
$$
\end{example}

In what follows we assume that all the string tuples which occur in term leaves belong to $\Theta^{(\leq 1)}$. Obviously, this constraint does not restrict the generative power of DCFGs.

\begin{definition}
A term is called linear if it contains zero or one occurrences of nonterminals. A grammar is linear if right sides of all its rules are linear terms.
\end{definition}

In this paper we study normal forms for linear DCFGs. The following result for DCFGs in general was obtained in \cite{Sorokin2013LFCS}.

\begin{theorem}
Every $k$-DCFG is equivalent to some $k$-DCFG $G = \inang{N, \Sigma, P, S}$ which has the rules only of the following form:
\vspace*{-8pt}
\begin{enumerate}
\item $A \to B \cdot C,  \mbox{ where } A \in N - \{X\},\: B, C \in N - \{S\}$,
\item $A \to B \odot_j C,\mbox{ where } j \leq k,\: A \in N - \{X\} ,\: B, C \in N - \{S, X\}$,
\item $A \to a, \mbox{ where } a \in \Sigma $,
\item $X \to (\epsilon, \epsilon)$,
\item $S \to \epsilon$.
\end{enumerate}
\end{theorem}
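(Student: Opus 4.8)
The plan is to follow the classical Chomsky-normal-form recipe, adapted to the tuple algebra and carried out so as to respect ranks at every step. The guiding invariant is that genuine emptiness is split by rank: the rank-$0$ empty tuple $(\epsilon)$ will be producible only by the start symbol, while every empty component of positive rank will be funnelled through the single generator $X \to (\epsilon,\epsilon)$. I would begin with two routine preparations. First, isolate the start symbol: introduce a fresh start nonterminal whose rules copy those of the old start, after which the start symbol never occurs on a right-hand side, securing the side conditions $B,C \ne S$ in rule types 1 and 2. Second, separate terminals: for each $a \in \Sigma$ add $A_a$ of rank $0$ with the rule $A_a \to a$ (type 3) and replace every terminal leaf by $A_a$. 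After this the only non-nonterminal leaves left are empty tuples from $\Theta^{(\leq 1)}$.

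Next I would remove the empty leaves of positive rank using $X$. Since $(\epsilon,\epsilon)\cdot(\epsilon,\epsilon)=(\epsilon,\epsilon,\epsilon)$, the all-empty tuple of rank $m\ge 1$ equals the concatenation $X\cdots X$ ($m$ factors), and a length-$1$ leaf of rank $m\ge1$ carrying $a$ in component $p$ equals $X^{p}\cdot A_a\cdot X^{m-p}$; substituting these expressions leaves $(\epsilon)$ as the only possible empty leaf. I would then binarize: reading each right-hand side as a term tree over nonterminal leaves and introducing one fresh nonterminal per internal node, every $\cdot$ node becomes a type-1 rule and every $\odot_j$ node a type-2 rule, with the topmost node inheriting the original head so that no unit rule is created. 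Note that concatenation is allowed to carry $X$ (type 1 permits any $B,C\neq S$), which is exactly what the padding above uses; intercalation, by contrast, must avoid $X$, and indeed any intercalation with an $X$ operand is an identity ($X\odot_1 C=C$ and $B\odot_j X=B$) and is simplified to a unit rule, to be cleared below---this is the reason the side condition of type 2 excludes $X$.

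The step I expect to be the real obstacle is the elimination of rank-$0$ emptiness. Call a nonterminal nullable if it derives $(\epsilon)$, and compute this set by the usual least-fixed-point saturation. In a concatenation rule a nullable operand may simply be deleted, since $(\epsilon)\cdot C=C=C\cdot(\epsilon)$, yielding the unit variant $A\to C$, exactly as for context-free grammars. Intercalation is genuinely different: its right operand sits inside a gap, so nullifying it does not delete a symbol but performs a merge, $B\odot_j(\epsilon)=\mu_j(B)$, collapsing the two components adjacent to gap $j$ and lowering the rank by one (the left operand has rank $\ge j\ge 1$ and can never be nullable in this sense). One therefore cannot drop the symbol; instead, for each such $B$ and $j$, I would introduce an auxiliary nonterminal of rank $\rk{B}-1$ standing for ``$B$ with components $j-1$ and $j$ glued,'' and generate its rules by propagating the merge through the binary rules defining $B$---pushing it into whichever child the two glued components originate from and, in the base case where gap $j$ is the seam produced by the top operation, replacing that $\odot_j$ by a $\cdot$. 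Carrying this substitution out for every subset of nullable occurrences, deleting the now-redundant rules $A\to(\epsilon)$ for $A\neq S$, and retaining $S\to\epsilon$ precisely when $\epsilon\in L(G)$, removes all internal rank-$0$ emptiness.

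Finally I would clear the unit rules $A\to C$ accumulated above by the standard transitive-closure method: collect all pairs for which $A$ derives $C$ through unit rules only, and copy each non-unit rule of $C$ onto $A$; since the copied rules are of types 1--3, no new rule shapes appear. The correctness of the whole construction reduces to checking that each transformation preserves $L_G(S)$: the start-isolation, terminal-separation, $X$-padding and binarization steps are value-preserving by direct computation of $\nu$, while the epsilon step needs the one substantive lemma, namely that the merge-propagation faithfully realizes $\mu_j$ on the language of $B$. Chaining these equalities gives a grammar all of whose rules have one of the five prescribed forms and which generates the same language, as required.
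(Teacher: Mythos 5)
First, a point of fact: the paper does not prove this theorem at all --- it is stated as a known result imported from \cite{Sorokin2013LFCS}, and the proofs the paper actually contains concern only the linear case (Lemmas 1--3 and the final theorem). So your proposal cannot be checked against an in-paper proof of this statement; the fair comparison is with the toolkit the paper deploys for linear grammars. On that comparison you do well: your pipeline (start-symbol isolation, terminal separation, padding positive-rank leaves by powers of $X$ using $(\epsilon,\epsilon)\cdot(\epsilon,\epsilon)=(\epsilon,\epsilon,\epsilon)$, binarization, then $\epsilon$- and unit-elimination) is the expected adaptation of the Chomsky-normal-form recipe, and your central observation --- that a nullable right operand of $\odot_j$ cannot simply be deleted but forces a merge of components $j-1$ and $j$ of the left operand, realized by a fresh nonterminal of rank $\rk{B}-1$ whose rules are obtained by pushing the merge through the rules for $B$ --- is exactly the paper's bridge construction $\widehat{B}^j$ in the proof of Lemma \ref{nf-third}. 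Your identity-simplifications $X\odot_1 C \sim C$ and $B\odot_j X \sim B$ are items 7 and 8 of Statement \ref{term-eq-basic}.

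Two defects in the merge step, one spurious and one substantive. The spurious one: your ``base case where gap $j$ is the seam produced by the top operation, replacing that $\odot_j$ by a $\cdot$'' never occurs. The seams $x_{l-1}y_0$ and $y_mx_l$ created by $\odot_l$ (and the seam $x_iy_0$ created by $\cdot$) lie \emph{inside} components, never at a comma, so every gap of $B\cdot C$ or of $B\odot_l C$ is inherited from exactly one operand and the propagation always descends into a child --- compare the complete three-way case analysis in item 3 of the paper's Lemma \ref{nf-third}. As stated your extra rule is vacuous (its trigger condition is unsatisfiable), so it adds no unsound rules, but it shows the genuine base case was misidentified. That base case is at tuple leaves, and there it bites: after your padding the only gapped leaf is $X$, and bridging it yields $\widehat{X}^1 \to (\epsilon)$, i.e.\ the bridge construction manufactures \emph{new nullable} nonterminals. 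Hence your nullable saturation and the merge propagation cannot be run as two sequential passes, as your sketch presents them; they must be interleaved and iterated until stable, and one must argue termination --- which holds because each bridge strictly lowers rank (the paper's own termination argument in Lemma \ref{nf-third}), so only finitely many bridge symbols, of rank at most $k$, can ever be created. With that interleaving made explicit and your phantom base case deleted, the argument goes through and yields precisely the five rule shapes, with the $X$-exclusion in type 2 justified exactly as you say.
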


\section{Normal forms for linear DCFGs}

A valuation may be extended to variables and nonterminals by assigning every variable an arbitrary word of appropriate rank. When the valuation is fixed, the value of a context is calculated just like the term value. Two contexts are equivalent if they have the same value under all valuations. Obviously, if we replace the right-hand term in a grammar rule by an equivalent term, the generated language does not change. Basic equivalencies are listed below:

\begin{statement}\label{term-eq-basic}
The following ground multicontexts are equivalent:
\begin{enumerate}
\item $(x_1 \cdot x_2) \cdot x_3 \sim x_1 \cdot (x_2 \cdot x_3)$,
\item $(x_1 \cdot x_2) \odot_j x_3 \sim (x_1 \odot_j x_3) \cdot x_2$ if $j \leq \rk{x_1}$,
\item $(x_1 \cdot x_2) \odot_j x_3 \sim x_1 \cdot (x_2 \odot_{j - \rk{x_1}} x_3)$ if $\rk{x_1} < j \leq \rk{x_1} + \rk{x_2}$,
\item $(x_1 \odot_l x_2) \odot_j x_3 \sim (x_1 \odot_j x_3) \odot_{l + \rk{x_3} - 1} x_2$ if $j < l$,
\item $(x_1 \odot_l x_2) \odot_j x_3 \sim x_1 \odot_l (x_2 \odot_{j - l + 1} x_3)$ if $l \leq j < l + \rk{x_2}$,
\item $(x_1 \odot_l x_2) \odot_j x_3 \sim (x_1 \odot_{j - \rk{x_2} + 1} x_3) \odot_l x_2$ if $j \geq l + \rk{x_2}$.
\item $(\epsilon, \epsilon) \odot_1 x_1 \sim x_1$,
\item $x_1 \odot_j (\epsilon, \epsilon) \sim x_1$ for any $j \leq \rk{x_1}$.
\end{enumerate}
\end{statement}

\begin{lemma}
Every linear $k$-DCFG is equivalent to some $k$-DCFG with the rules only of the form
\begin{itemize}
\item $A \to uB$ or $A \to Bu$, $|u| \leq 1, \: u \neq \epsilon$,
\item $A \to B \odot_j u$, $|u| \leq 1$,
\item $A \to u$, $|u| \leq 1$,
\end{itemize}
\end{lemma}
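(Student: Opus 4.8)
The plan is to normalise each linear rule independently, since the target conditions are local to individual rules. Fix a rule $A \to \alpha$. First I would evaluate every maximal nonterminal-free subterm of $\alpha$ and replace it by the single tuple of $\Theta$ it denotes; this is value-preserving and leaves a term whose only non-leaf data are operation symbols together with a single nonterminal leaf $B$ (or none). If $\alpha$ has no nonterminal it now denotes one tuple $v$, and I treat $A \to v$ as a degenerate case of what follows. Otherwise $\alpha = C[B]$, and reading $C$ from the $B$-leaf outwards exposes a spine of operations $\square_1, \ldots, \square_p$, each combining the running value with one ground operand $g_i \in \Theta$ on a fixed side. I would then unfold this spine into a chain of fresh nonterminals $B = A_0, A_1, \ldots, A_p = A$, where the rule for $A_i$ performs exactly the single operation $\square_i$ on $A_{i-1}$ and $g_i$: splitting a rule at an operation node by a fresh nonterminal is the standard value-preserving binarisation, and since each $A_i$ denotes the value of a subterm of the $k$-correct term $\alpha$, all intermediate ranks stay $\le k$. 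It remains to bring each single-operation rule into the allowed shape, i.e.\ (i) to reduce the operand to length $\le 1$ and (ii) to keep the nonterminal on the left of every intercalation.

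For a concatenation step $A_i \to A_{i-1} \cdot g$ (or $g \cdot A_{i-1}$) I would peel $g$ one letter or component at a time using associativity (Statement~\ref{term-eq-basic}.1), writing $g = g' \cdot w$ with $|w| \le 1$ and introducing one fresh nonterminal per peeled piece, so that each emitted rule has the form $A \to A'u$ or $A \to uA'$ with $|u| \le 1$; pieces with $u = \epsilon$ are identities and are dropped. For an intercalation with the nonterminal on the left, $A_i \to A_{i-1} \odot_j g$ with $g = (g_0, \ldots, g_m)$, I would first insert the empty frame $A_{i-1} \odot_j (\underbrace{\epsilon, \ldots, \epsilon}_{m+1})$ (an allowed rule, the operand having length $0$) and then fill the $g_r$ into the new components one letter at a time, by intercalations with the length-$1$ tuples $(a,\epsilon)$ or $(\epsilon,a)$; each such step is an allowed $A \to A' \odot_{j'} u$, and by construction the composite denotes $A_{i-1} \odot_j g$.

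The one case not immediately of the required shape is an intercalation with the nonterminal on the right, $g \odot_j A_{i-1}$ with $g = (g_0,\ldots,g_n)$. Here I would observe that the components contributed by $A_{i-1}$ form a contiguous block in the result, while the surviving components of $g$ lie entirely to its left ($g_0, \ldots, g_{j-2}$) or entirely to its right ($g_{j+1}, \ldots, g_n$), with $g_{j-1}$ merged onto the first block component and $g_j$ onto the last. Hence the whole value can be rebuilt from $A_{i-1}$ by left/right concatenations alone: the merges by the rank-$0$ concatenations $(g_{j-1}) \cdot A_{i-1}$ and $A_{i-1} \cdot (g_j)$, and each separate surviving component by a concatenation with a rank-$1$ operand $(g_r, \epsilon)$ resp.\ $(\epsilon, g_r)$, prepended resp.\ appended from the block outwards so as to preserve their order. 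After peeling these operands to length $\le 1$ as above, every emitted rule is of the form $A \to uB$ or $A \to Bu$, and no intercalation with the nonterminal on the right survives. The degenerate ground case $A \to v$ is handled identically, starting the chain from a base rule $A' \to (\epsilon, \ldots, \epsilon)$ of the appropriate rank (length $0$, hence allowed) instead of from $B$.

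Finally I would run a cleanup pass: use Statement~\ref{term-eq-basic}.7--8 to erase the trivial operations $(\epsilon,\epsilon)\odot_1 x \sim x$ and $x \odot_j (\epsilon,\epsilon) \sim x$ that the frame-and-fill procedure may create, and then remove the remaining genuine unit rules $A \to B$ by the usual transitive chain-closure, which terminates because the grammar is linear. Value-preservation of each individual step is a direct computation from the definitions of $\cdot$ and $\odot_j$ (packaged in Statement~\ref{term-eq-basic}), and the rank of every newly introduced nonterminal equals the rank of the subterm it denotes, so the result is again a $k$-DCFG. I expect the main obstacle to be the bookkeeping in the two operand-reduction steps and in the right-intercalation conversion: keeping the component indices, the gap index $j$, and the running rank consistent while checking that the composite of the small steps really denotes the original single operation applied to $B$.
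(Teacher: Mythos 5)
Your proposal is correct and is in essence the same argument as the paper's: the paper likewise reduces every linear right-hand side to a ``well-formed'' spine term (a nonterminal or small tuple at the bottom, with one operand from $\Theta^{(\leq 1)}$ per operation), and its key fact $(u_0, \ldots, u_l) \odot_j \alpha \sim (u_0, \ldots, u_{j-1})\,\alpha\,(u_j, \ldots, u_l)$ is exactly your conversion of an intercalation with the nonterminal on the right into the two merging rank-$0$ concatenations plus the rank-$1$ concatenations $(g_r,\epsilon)$ and $(\epsilon,g_r)$. The difference is only presentational: the paper proves equivalence to a well-formed term by structural induction via Statement~\ref{term-eq-basic} and leaves the binarization into one-operation rules and the peeling of long tuples implicit, whereas you make these explicit (evaluating ground subterms, the chain of fresh nonterminals along the spine, and the frame-and-fill decomposition), which is precisely the bookkeeping the paper's one-paragraph proof glosses over.
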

\begin{proof}
Through the proof we define a well-formed term by the following definition:
\begin{itemize}
\item A nonterminal or an element of $\Theta^{(\leq 1)}$ is well-formed,
\item If $\alpha$ is a well-formed term, then any $k$-correct term of the form $u \alpha$ or $\alpha u$, where $u \in \Theta^{(\leq 1)}$ and $u \neq \epsilon$, is well-formed,
\item If $\alpha$ is a well-formed term, then any $k$-correct term of the form $\alpha \odot_j u$, where $u \in \Theta^{(\leq 1)}$, is well-formed,
\end{itemize}

It is sufficient to prove that every linear term $\alpha$ is equivalent to some well-formed term. This is done by induction on term construction using the basic equivalencies and the fact that $(u_0, \ldots, u_l) \odot_j \alpha \sim (u_0, \ldots, u_{j-1}) \alpha (u_j, \ldots, u_l)$ for any term $\alpha$.
\end{proof}

\newcommand{\sepone}{\mathrm{1}}
In what follows we sometimes denote the tuple $(\epsilon, \epsilon)$ by $\sepone$.

\begin{lemma}\label{nf-second}
Every linear $k$-DCFG $G$ is equivalent to some $k$-DCFG with the rules only of the form
\begin{itemize}
\item $A \to uB$ or $A \to Bu$, $|u| \leq 1, \: u \neq \epsilon$,
\item $A \to B \odot_j u$, $|u| \leq 1$,
\item $A \to u$, $|u| = 1$,
\item $S \to \epsilon$.
\end{itemize}
\end{lemma}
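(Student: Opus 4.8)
The passage from the preceding lemma to the present statement is an instance of \emph{$\epsilon$-elimination}: the only rules that differ are the terminal ones, which must now have $|u| = 1$ apart from the single survivor $S \to \epsilon$. The plan is to start from a grammar $G$ already in the previous normal form and to erase every terminal rule $A \to u$ with $|u| = 0$, that is, every rule whose right-hand side is an all-empty tuple $(\epsilon, \ldots, \epsilon)$, compensating for their deletion by precomputed ``short-circuit'' rules.

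First I would isolate the set $\mathcal{E} \subseteq N$ of \emph{nullable} nonterminals, where $A \in \mathcal{E}$ iff $A \vdash_G \alpha$ for some $\alpha \in \GrTm_k$ with $\nu(\alpha)$ the all-empty tuple of rank $\rk{A}$. Because the grammar is linear, $\mathcal{E}$ is the least fixed point of the evident monotone operator: $A \in \mathcal{E}$ as soon as $G$ has a rule $A \to u$ with $|u| = 0$, a rule $A \to uB$ or $A \to Bu$ with $|u| = 0$ and $B \in \mathcal{E}$, or a rule $A \to B \odot_j u$ with $|u| = 0$ and $B \in \mathcal{E}$; this is obtained by the usual iteration, which terminates since $N$ is finite. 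Next, for every rule $A \to \alpha$ whose unique (by linearity) nonterminal $B$ lies in $\mathcal{E}$, I would add the terminal rule $A \to w$, where $w$ is the value of $\alpha$ after $B$ is replaced by the all-empty tuple of rank $\rk{B}$. A short computation with $\cdot$ and $\odot_j$ (the unit laws behind equivalences~7 and~8 of Statement~\ref{term-eq-basic}) shows that $w \in \Theta^{(\leq 1)}$: erasing an empty $B$ from $uB$, $Bu$, or $B \odot_j u$ leaves a tuple of length exactly $|u| \leq 1$. Since each short-circuit rule is itself terminal, no further iteration of this step is needed.

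Finally I would delete every terminal rule $A \to w$ with $|w| = 0$ and add $S \to \epsilon$ precisely when $S \in \mathcal{E}$ (equivalently $\epsilon \in L(G)$), this being the only empty tuple allowed to survive; the remaining terminal rules then all have length $1$. For correctness, $L(G') \subseteq L(G)$ is immediate, because each short-circuit rule $A \to w$ is simulated in $G$ by applying $A \to \alpha$ and then deriving the empty tuple from $B \in \mathcal{E}$, while $S \to \epsilon$ is added only when $\epsilon \in L(G)$. The converse is proved by induction on the length of a $G$-derivation, with the hypothesis that every non-empty $\nu(\alpha) \in L_G(A)$ lies in $L_{G'}(A)$: in the inductive step one inspects the first rule $A \to \alpha$, keeps it and applies the hypothesis to $B$ when $B$ evaluates to a non-empty tuple, and otherwise fires the corresponding short-circuit rule, whose value coincides with $\nu(\alpha)$ and is non-empty (hence of length $1$, so it survived pruning) because the whole word is.

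The genuinely delicate points are the tuple bookkeeping—verifying that erasing a nullable nonterminal from each of the three rule shapes yields a terminal right-hand side of length at most one, so that pruning leaves exactly the length-one terminal rules—and the treatment of the empty word, which I expect to be the main obstacle. One must check that retaining $S \to \epsilon$ causes no overgeneration even though $S$ may still occur on some right-hand side; this holds because every rule containing $S$ already receives its own short-circuit rule, so the presence of $S \to \epsilon$ only re-derives words already produced through those short-circuits, and $\epsilon$ itself is generated iff $S \in \mathcal{E}$.
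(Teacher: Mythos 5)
Your proposal is correct and takes essentially the same approach as the paper: it identifies the $\epsilon$-generating (nullable) nonterminals by a fixed-point computation, substitutes the all-empty tuple for nullable occurrences in each rule shape to obtain length-one terminal ``short-circuit'' rules, keeps $S \to \epsilon$ exactly when $\epsilon \in L(G)$, and argues correctness by induction on derivations. The only cosmetic difference is that the paper also distinguishes \emph{strictly} $\epsilon$-generating nonterminals and drops rules whose nonterminal is strictly $\epsilon$-generating, whereas you retain such rules as harmless (merely useless) ones, and you additionally spell out the point about $S \to \epsilon$ causing no overgeneration, which the paper leaves implicit.
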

\begin{proof}
The proof is analogous to $\epsilon$-rules elimination in standard DCFGs. We assume that $G$ already has the form introduced in the previous lemma. We want to create a new grammar with the set of rules $P'$ where every nonterminal $A \neq S$ of rank $l$ generates all the tuples except $\underbrace{(\epsilon, \ldots, \epsilon)}_{l+1 \text{ times}} = \sepone^l$. At first we determine for every nonterminal, whether it generates the word $\sepone^l$, such nonterminals are called $\epsilon$-generating.  If $A$ generates only this word, then it is called strictly $\epsilon$-generating.

We process every element of the old set of rules $P$ by the following algorithm.
\begin{enumerate}
\item If the rule has the form $A \to B_l * u, \: * \in Op_k$ and $B$ is not strictly $\epsilon$-generating, then this rule is added to $P'$.
\item If the rule has the form $A \to B_l u$, $|u| = 1$ and $B$ is $\epsilon$-generating, then we also add the rules, obtained by binarizing the rule $A \to (\epsilon, \epsilon)^p u$.
\item If the rule has the form $A \to u B_l$, $|u| = 1$ and $B$ is $\epsilon$-generating, then we also add the rule $A \to u (\epsilon, \epsilon)^p$.
\item If the rule has the form $A \to B_l \odot_j u$, $|u| = 1$ and $B$ is $\epsilon$-generating, then we also add the rule $A \to (\epsilon, \epsilon)^{j-1} u (\epsilon, \epsilon)^{l-j}$.
\item We include to $P'$ all the rules in $P$ of the form $A \to u$, $|u| = 1$.
\item We also include the rule $S \to \epsilon$, if $\epsilon \in L(G)$. 
\end{enumerate}

The correctness of the constructed grammar is proved by standard induction on word length.

\end{proof}

\begin{lemma}\label{nf-third}
Every linear $k$-DCFG $G$ is equivalent to some $k$-DCFG with the rules only of the form
\begin{itemize}
\item $A \to uB$ or $A \to Bu$, $|u| \leq 1, \: u \neq \epsilon$,
\item $A \to B \odot_j u$, $|u| = 1$,
\item $A \to u$, $|u| = 1$,
\item $S \to \epsilon$.
\end{itemize}
\end{lemma}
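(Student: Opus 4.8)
The plan is to eliminate the rules $A \to B \odot_j u$ with $|u| = 0$ by a construction analogous to unit-rule elimination, the role of the unit rules being played by these \emph{silent} intercalations. I would first assume $G$ is already in the form of Lemma~\ref{nf-second}, so that the only rules to be removed are $A \to B \odot_j \sepone^{\,m}$, where $\sepone^{\,m}$ denotes the all-empty tuple of rank $m$. Computing the value of such a rule shows that it produces no terminal symbol and merely \emph{reshapes} the tuple derived by $B$: for $m=0$ it merges the two components adjacent to position $j$, for $m=1$ it is the identity (this is item~8 of Statement~\ref{term-eq-basic}), and for $m \geq 2$ it inserts $m-1$ empty components. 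I would record this effect as a \emph{reshaping} $\sigma$, i.e.\ an order-preserving map taking a tuple of rank $r$ to a tuple obtained by grouping its $r+1$ components into consecutive (possibly empty) blocks and concatenating inside each block; every such map preserves the total length of the tuple.

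The crucial finiteness observation is that, among ranks bounded by $k$, there are only finitely many reshapings, they are closed under composition, and the identity is among them. Consequently I introduce annotated nonterminals $[B,\sigma]$, one for every $B \in N$ and every reshaping $\sigma$, intended to generate $\{\sigma(t) \mid t \in L_G(B)\}$, and I identify $[B,\mathrm{id}]$ with $B$ and take $[S,\mathrm{id}]$ as the start symbol. For each original rule $B \to \rho$ and each $\sigma$ I add one rule to $[B,\sigma]$ obtained by normalising $\sigma(\rho)$. If $B \to \rho$ is itself silent, say with reshaping $\tau$, then $\sigma(\rho)$ is the reshaping $\sigma \circ \tau$ of the inner nonterminal $B'$, so this contributes an ordinary unit rule $[B,\sigma] \to [B',\sigma \circ \tau]$ that contains no silent intercalation. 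All silent rules thus turn into genuine unit rules on the finite set of annotated nonterminals, and these are then removed by the standard unit-rule elimination.

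It remains to normalise $\sigma(\rho)$ when $B \to \rho$ is non-silent. If $\rho = u'$ is a terminal tuple then $\sigma(u')$ is again a concrete tuple and, since reshapings preserve length, it has length $1$ (the rule $S \to \epsilon$ is simply kept), giving a legal production $[B,\sigma] \to \sigma(u')$. If $\rho$ is $u'B'$, $B'u'$ or $B' \odot_{j'} u'$ with $|u'| = 1$, I use the basic equivalences~2--6 of Statement~\ref{term-eq-basic} to commute the merges and empty-insertions that make up $\sigma$ past the single connective of $\rho$; this rewrites $\sigma(\rho)$ into $w\,[B',\sigma']$, $[B',\sigma']\,w$ or $[B',\sigma'] \odot_{j''} u''$ with $|w| = |u''| = 1$ and a residual reshaping $\sigma'$ drawn from the same finite set. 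Correctness then follows by induction on derivation length, each maximal silent segment of an old derivation corresponding to one composite reshaping that is realised by the unit rules just introduced.

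The main obstacle is precisely this last normalisation, and in particular the junction where a block of $\sigma$ straddles the boundary between the concrete part $u'$ of $\rho$ and its variable part $B'$. The delicate points I expect to check are that the single terminal of $u'$ is never lost and always ends up inside the length-$1$ tuple $w$ or $u''$ (so the produced rule is of the allowed shape and not silent again), and that pushing $\sigma$ inward via equivalences~2--6 always deposits a residual reshaping $\sigma'$ belonging to the finite reshaping set, so that the nonterminal $[B',\sigma']$ already exists and the construction terminates after finitely many additions.
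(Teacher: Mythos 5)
Your proposal is correct and is essentially the paper's own proof in a batched form: your annotated nonterminals $[B,\sigma]$ are exactly the paper's bridge nonterminals $\widehat{B}^j$ (single merges, the $m=0$ case) and split nonterminals $\widebar{B}^{j,l}$ (empty insertions, the $m \geq 2$ case) closed under composition, and both arguments push the reshaping through each rule via equivalences 2--6 of Statement~\ref{term-eq-basic} and then remove the resulting unit rules by the standard procedure. The delicate point you flag at the end --- that the residual reshaping $\sigma'$ must stay within the rank bound $k$ --- is the same point the paper handles only by its side condition $\rk{B} + l \leq k+1$ (and by noting that bridging strictly decreases rank, guaranteeing termination), so your attempt matches the paper's treatment, including its level of detail at that spot.
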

\begin{proof}
We assume that $G$ already satisfies Lemma \ref{nf-second}. At first we want to remove the rules of the form $A \to B \odot_j \epsilon$. To reach this goal we create for every nonterminal $B$ and every $j \in [1; \rk{B}]$ its $j$-th bridge $\widehat{B}^j$ with the following properties: if $B$ generates the word $(u_0, \ldots, u_{j-1}, u_j, \ldots, u_l)$, then $\widehat{B}^j$ generates the word $(u_0, \ldots, u_{j-1} u_j, \ldots, u_l)$ and vice versa. Then we create bridges for the newly introduced nonterminals and so on. Since the bridged nonterminal has lower rank then the initial one, this process will terminate. 

To satisfy the declared properties we extend the grammar with the following rules. The notation $\widehat{u}^j$ denotes the word obtained from $u = (u_0, \ldots, u_{j-1}, u_j, \ldots, u_l)$ by removing the $j$-th gap. The subscript here and to the end of the paper marks the rank of the nonterminal.
\begin{enumerate}
\item For every rule $A \to u B$ we add the rule $\widehat{A}^j \to \widehat{u}^j B$ in case $j \leq \rk{u}$ and the rule $\widehat{A}^j \to u \widehat{B}^{l-j}$ in case $l \geq \rk{u}$.
\item For every rule $A \to B_r u$ we add the rule $\widehat{A}^j \to B \widehat{u}^{j-r}$ in case $j > r$ and the rule $\widehat{A}^j \to \widehat{B}^{j} u$ in case $j \leq r$.
\item For every rule $A \to B \odot_l u$ we add the rule $\widehat{A}^j \to \widehat{B}^j \odot_{l-1} u$ in case $j  < m$, the rule $\widehat{A}^j \to B \odot_{l} \widehat{u}^{j-l+1}$ in case $l \leq j < l + \rk{u}$ and $\widehat{A}^j \to \widehat{B}^{j - \rk{u} + 1} \odot_{l} u$ in case $j \geq l + \rk{u}$.
\item For every rule $A \to u$ we add the rule $\widehat{A}^j \to \widehat{u}^j$.
\item If the grammar contained the rule $S \to \epsilon$, we preserve this rule.
\end{enumerate}

Afterwards we remove replace every rule of the form $A \to B \odot_j \epsilon$ with the rule $A \to \widehat{B}^j$. We also replace all the rules of the form $A \to B \odot_j (\epsilon, \epsilon)$ by the rule $A \to B$ and then eliminate unary rules by standard procedure.

It remains to remove the rules of the form $A \to B \odot_j \sepone^l$ for $l \geq 2$. It is done analogously to the previous step. On the set of tuples we define the $j,l$-split operation $\widebar{u}^{j,l}$, which inserts the tuple $\sepone^l$ into the $j$-th gap of $u$, this operation is naturally extended to languages. For every nonterminal $B$ we introduce its $j,l$-split $\widebar{B}^{j,l}$ (in case $\rk{B} + l \leq k+1$) which generates the $(j,l)$-split of $L(B)$. We repeat this procedure until all nonterminals of rank less than $K$ have splitted versions. It is done just in the same way we have introduced the bridge nonterminals.

Now we replace every rule of the form $A \to B \odot_j \sepone^l$ by the rule $A \to \widebar{B}^{j,l}$ and eliminate unary rules as earlier. The lemma is proved.
\end{proof}

Finally, we want to eliminate tuples of length $0$ at all. For every natural $p$ we introduce an unary operation ${}_{/p}$, which transforms a tuple of the form $u = va\sepone^p$ to the string $u_{/p} = va$ in case $a \in \Sigma$, otherwise this operation is undefined. Informally, it deletes $p$ rightmost $\epsilon$ components of the tuple provided the rightmost fragment of the obtained tuple will be nonempty. The operation ${}_{\backslash p}$ is defined symmetrically. Both the operations are naturally extended from individual tuples to languages.

\begin{theorem}
Every linear $k$-DCFG $G$ is equivalent to some $k$-DCFG with the rules only of the form
\begin{itemize}
\item $A \to uB$ or $A \to Bu$, $|u| = 1$,
\item $A \to B \odot_j u$, $|u| = 1$,
\item $A \to u$, $|u| = 1$,
\item $S \to \epsilon$.
\end{itemize}
\end{theorem}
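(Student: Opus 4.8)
The plan is to build directly on Lemma~\ref{nf-third}. First observe that in a grammar already in the form of Lemma~\ref{nf-third} the only leaves of length $0$ are the tuples $\sepone^m$ with $m \geq 1$ occurring in the concatenation rules $A \to \sepone^m B$ and $A \to B\sepone^m$: the intercalation rules and the bare-tuple rules carry leaves of length exactly $1$ by hypothesis. Hence it suffices to eliminate these two families, i.e. to learn how to generate $\sepone^m L(B)$ and $L(B)\sepone^m$ without writing an explicit empty tuple.

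I would treat the trailing case $A \to B\sepone^m$ by a construction parallel to the bridges of Lemma~\ref{nf-third}. For every nonterminal $B$ and every $p$ with $\rk{B}+p \leq k$ introduce a padded nonterminal $B^{+p}$ whose intended language is $L(B)\sepone^p$ (the operation ${}_{/p}$ being its one-sided inverse), and generate its rules from the rules of $B$:
\begin{itemize}
\item $B \to u$ becomes $B^{+p} \to u\sepone^p$, and since $|u|=1$ the tuple $u\sepone^p$ still has length $1$;
\item $B \to uC$ becomes $B^{+p} \to u\,C^{+p}$;
\item $B \to Cu$ becomes $B^{+p} \to C\,(u\sepone^p)$, again a legal length-$1$ leaf;
\item $B \to C\odot_j u$ becomes $B^{+p} \to C^{+p}\odot_j u$, justified by equivalence~2 of Statement~\ref{term-eq-basic} since $j \leq \rk{C}$.
\end{itemize}
Replacing $A \to B\sepone^m$ by the unit rule $A \to B^{+m}$ and then removing unit rules as before kills the trailing family; the leading family $A \to \sepone^m B$ is handled symmetrically, with the mirrored leaves $\sepone^p u$, the slice ${}_{\backslash p}$, and equivalences~1 and~3 of Statement~\ref{term-eq-basic}.

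The delicate point, which I expect to be the main obstacle, is the last bullet when the intercalated leaf $u$ has rank $0$, say $u=(a)$. Then $\rk{C}=\rk{B}+1$, so $C^{+p}$ would have rank $\rk{B}+1+p$, which exceeds $k$ exactly when $B$ is padded up to the maximal rank $\rk{B}+p=k$. Here the trailing block $\sepone^p$ must be attached to a tuple of already maximal rank while the letter $a$ is inserted at an interior gap, and no purely local rewriting produces it: padding $C$ first breaks the rank bound, and the bridge identity $C\odot_j(a)\sim\widehat{C\odot_j(a,\epsilon)}^{\,j}$ only reproduces a rank-$0$ intercalation one level down. What does work is a global argument, by induction on $\rk{B}$ (equivalently on $k-p$), that $L(B)\sepone^p$ is generable in the target form: the inserted letter is threaded through the derivation of $C$ by boundary intercalations with leaves of the shape $(a,\epsilon)$, keeping the final gap empty throughout, which is precisely what the slice operations ${}_{/p}$ and ${}_{\backslash p}$ are meant to bookkeep. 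Granting this, termination is immediate, since every rank stays in $[0,k]$ and hence only finitely many auxiliary nonterminals are created; and correctness follows from the routine induction on the length of derivations showing that each auxiliary nonterminal generates exactly the sliced or padded language claimed, so that $L(G)$ is unchanged.
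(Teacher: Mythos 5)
There is a genuine gap, and you have located it yourself: your construction breaks exactly at the rules $B \to C \odot_j (a)$ with a rank-$0$ leaf when $\rk{B}+p=k$, and the closing paragraph you offer in place of a fix is not an argument. Worse, the induction you invoke cannot work as stated: in that rule $\rk{C}=\rk{B}+1$ while the padding amount $p$ is unchanged, so passing from $B^{+p}$ to the (illegal) $C^{+p}$ \emph{increases} the rank and leaves $k-p$ fixed --- your proposed measure fails to decrease on precisely the one case that needs it. The phrase about ``threading the letter through the derivation of $C$ by boundary intercalations with leaves $(a,\epsilon)$'' does not correspond to any construction: as you yourself observe, intercalating $(a,\epsilon)$ and later closing the gap is just a bridge in the sense of Lemma~\ref{nf-third} and reproduces the same rank-$0$ intercalation one level down, while the identity $(w \odot_j (a))\cdot\sepone^p \sim (w\cdot\sepone^p)\odot_j (a)$ shows that any padding-style factorization of the target tuple passes through an object of rank $k+1$. (A smaller omission: when generating the rules of $B^{+p}$ you only treat rules with length-$1$ leaves, but a grammar in the form of Lemma~\ref{nf-third} still contains $B \to C\sepone^q$ and $B \to \sepone^q C$, which must be processed too --- these at least cause no rank trouble, giving unit rules $B^{+p} \to C^{+(p+q)}$.)

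The paper resolves the difficulty by running your idea in the opposite, rank-\emph{decreasing} direction. Instead of padding ($L(B^{+p})=L(B)\cdot\sepone^p$, rank $\rk{B}+p$), it slices: for every $A$ and $p \leq \rk{A}$ it introduces $A_{/p}$ of rank $\rk{A}-p$ with $L(A_{/p})=(L(A))_{/p}$, takes $S_{/0}$ as the new start symbol, and rebuilds the whole grammar in terms of the sliced nonterminals rather than patching individual rules. The trailing empty components are then never materialized: a rule $A \to B\sepone^q$ becomes the unit rule $A_{/p} \to B_{/(p-q)}$, carrying the outstanding empty components as a subscript, and the obligation is discharged the moment a rule attaches a letter at the boundary --- e.g.\ $A \to B\,(\sepone^q a \sepone^p)$ yields $A_{/p} \to B\,(\sepone^q a)$, and similarly for intercalated leaves. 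Since $\rk{A_{/p}} \leq \rk{A} \leq k$ always, the rank bound is never threatened, which is exactly what your $B^{+p}$ cannot achieve; leading empties are then removed by the symmetric pass with ${}_{\backslash p}$. Your reduction to the two families $A \to B\sepone^m$ and $A \to \sepone^m B$, and your case analysis for the unproblematic rules, mirror the paper's proof in the dual direction, but without this inversion from padding to slicing the argument does not go through.
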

\begin{proof}
We assume that initial grammar $G = \inang{N, \Sigma, P, S}$ already satisfies Lemma \ref{nf-third}. We set $N' = \{ A_{/p} \mid A \in N, \, p \leq \rk{A} \}$, $S' = S_{/0}$ and construct the set $P'$ by the following procedure:
\begin{enumerate}
\item For every rule of the form $A \to u B$ we add the rule $A_{/p} \to u B_{/p}$ for all possible $p$.
\item For every rule of the form $A \to B (\sepone^q a \sepone^p)$ (every rule of the form $A \to Bu$ with $|u| = 1$ can be expressed so) we add the rule $A_{/p} \to B (\sepone^q a)$.
\item For every rule of the form $A \to B 1^q$ and every $p \geq q$, we add the rule $A_{/p} \to B_{/(p-q)}$.
\item For every rule of the form $A \to B \odot_j u$ and every  $p < \rk{B} - j$ we add the rule $A_{/p} \to B_{/p} \odot_j u$.
\item For every rule of the form $A \to B \odot_j (1^q a 1^r)$ we add the rule $A_{/p} \to B \odot_j (1^q a)$ with 
$p = r + (\rk{B} - j)$.
\item For every rule of the form $A \to a$ we add the rule $A_{/0} \to a$.
\item If $(S \to \epsilon) \in P$, then we also add the rule $S_{/0} \to \epsilon$.
\end{enumerate}

It is straightforward to check that $L(A_{/p}) = (L(A))_{/p}$, hence $L_(S_{/0}) = (L(S))_{/0} = L(S)$ as required.
We have eliminated rules of the form $A \to B \sepone^p$, the rules of the form  $A \to \sepone^p B$ are removed analogously. The theorem is proved.
\end{proof}

\bibliographystyle{plain}
\bibliography{../../../Science/Bibliography/sorokin,../../../Science/Bibliography/grammars}

\end{document}